\title{\large Deterministic Leader Election Takes $\Theta(D+\log n)$ Bit Rounds
\thanks{This research has been partially supported by ANR projects DESCARTES and ESTATE (resp. ANR-16-CE40-0023 and ANR-16-CE25-0009-03). A preliminary subset of this work 
appeared in the proceedings of DISC 2016~\cite{CMRZ16a}.
}
}
\author{A. Casteigts, Y. M\'etivier, J.M. Robson, and A. Zemmari}
\institute{Universit\'e de Bordeaux - Bordeaux INP
 LaBRI, UMR CNRS 5800\\ 351 cours de la
  Lib\'eration, 33405 Talence, France\\ 
\{acasteig, metivier, robson, zemmari\}@labri.fr } 
\begin{document}

\newcommand{\STT}{\ensuremath{{\cal STT}}\xspace}

\maketitle 
\date{today}
\begin{abstract}
  Leader election is, together with consensus, one of the most central
  problems in distributed computing. This paper presents a distributed
  algorithm, called \STT, for electing deterministically a leader in
  an arbitrary network, assuming processors have unique identifiers of
  size $O(\log n)$, where $n$ is the number of processors. It elects a
  leader in $O(D +\log n)$ rounds, where $D$ is the diameter of the
  network, with messages of size $O(1)$. Thus it has a bit round
  complexity of $O(D +\log n)$. This substantially improves upon the
  best known algorithm whose bit round complexity is $O(D\log n)$. In
  fact, using the lower bound by Kutten et al. (2015) and a
  result of Dinitz and Solomon (2007), we show that the bit round
  complexity of \STT is optimal (up to a constant factor), which is a
  significant step forward in understanding the interplay between time
  and message optimality for the election problem. Our algorithm
  requires no knowledge on the graph such as $n$ or $D$, and the
  pipelining technique we introduce to break the $O(D\log n)$ barrier
  is general.
\end{abstract}
\section{Introduction}
\label{sec:introduction}
The election problem in a network consists of distinguishing a unique node, the leader, which
 can subsequently act as coordinator, initiator, 
and more generally performs distinguished operations in the network
 (see \cite{Tanenbaum} p. 262). 
Indeed, once a leader is established, many problems become simple, making election a common building
block
in distributed computing. Election is probably the most studied task (together with consensus) in the distributed computing literature~\cite{DMR08},
starting with the works of Le Lann~\cite{LeLann} and Gallager~\cite{Gallager} in the late 70's. 

A distributed algorithm solves the
election problem if it always terminates and in the final
configuration exactly one process (or node) is in the \emph{elected} state 
and all
others are in the \emph{non-elected} state. It is also required that 
once a process becomes elected or non-elected, it remains so for the rest of the execution.
The vast body of literature on election (see~\cite{Attiya,Lynch,Santoro,Tel} and references therein) actually covers a number of different topics, which can be grouped according to three main directions: i) The feasibility of deterministic election in {\em anonymous} networks, starting with the seminal paper of Angluin~\cite{Angluin} and the key role of coverings (i.e. graph homomorphisms that prevent symmetry breaking, and thereby the uniqueness of a leader); ii) The complexity of deterministic election in {\em identified} networks (i.e. every processor has a unique identifier); and iii) The complexity of {\em probabilistic} election in anonymous or identified networks (identifiers play secondary roles here).


The present work is in the second category, that is, we assume that each node has a unique identifier which is a positive integer of size $O(\log n)$, where $n$ is the number of nodes. The network is multi-hop and nodes communicate using messages in synchronous rounds.
The exact complexity of deterministic leader election in this setting has proven elusive for decades and even simple questions remain open~\cite{KPPRT15}. We review here the most relevant results and challenges around this problem (the reader is referred to the dedicated section for more content).
In the case of logarithmic-size messages (i.e. messages of size $O(\log n)$), we know since Peleg~\cite{Peleg90} that $O(D)$ rounds are sufficient to elect a leader in arbitrary networks, where $D$ is the diameter of the network.
This was recently proven optimal by Kutten et al.~\cite{KPPRT15}
using a very general $\Omega(D)$ lower bound (that applies even in the probabilistic setting). Independently, Fusco and Pelc~\cite{FP15} showed that the 
time complexity
of leader election is
$\Omega(D+\lambda)$ where $\lambda$ is the smallest depth at which some node has a unique view, called the {\em level of symmetry} of the network. (The view at depth $t$ from a node is the tree of all paths
of length $t$ originating at this node.) If nodes have unique identifiers, then $\lambda=0$, which implies the same $\Omega(D)$ bound as in~\cite{KPPRT15}.

Regarding message complexity, 
Gallager~\cite{Gallager} presents
 the first election 
algorithm for general graphs with $O(m+n\log n)$ messages, 
where $m$ is the number of edges, and a running time of $O(n\log n)$.
On the negative side, Burns~\cite{burns1980} proves a $\Omega(n\log n)$ lower bound and Kutten et al.~\cite{KPPRT15} a $\Omega(m)$ lower bound which applies even if $n$ is known and the algorithm is randomized. Put together, both lower bounds yield a matching $\Omega(m + n\log n)$ number of messages. (Santoro~\cite{Santoro84} also proves a $\Omega(m+n\log n)$ lower bound for the more specific problem of finding the maximum ID, in a deterministic setting with $n$ unknown.)

A few years after Gallager~\cite{Gallager}, Awerbuch~\cite{Awer87} presents an algorithm whose message complexity is again $O(m+n\log n)$, but running time is taken down to $O(n)$.

A number of questions remain open for election. Peleg asks in~\cite{Peleg90}
whether an algorithm could be both optimal in time and in number of messages.
The answer depends on the setting, but remains essentially open~\cite{KPPRT15}.
In the conclusion of their paper, Fusco and Pelc \cite{FP15} also observe that
it would be interesting to investigate other complexity
measures for the leader election problem, such as {\em bit complexity}.
This measure can be viewed as a natural
extension of communication complexity (introduced by Yao \cite{Yao})
to the analysis of tasks in a distributed setting.

Following~\cite{KOSS}, the bit round complexity of an algorithm $\cal A$ is the total number of {\it bit rounds} it takes for $\cal A$ to terminate, where a bit round is a round with single bit messages. 
This measure has become popular recently, as it captures into {\em a single quantity} aspects that relate both to time and to the amount of information exchanged.
In this framework, the time-optimal algorithm of Peleg~\cite{Peleg90} results in a bit round complexity of $O(D\log n)$ (i.e. $O(D)$ rounds with $O(\log n)$ message size), and the message-optimal algorithm of~\cite{Awer87} results in a $O(n\log n)$ bit round complexity (i.e. $O(n)$ time with $O(\log n)$ message size). More recent approaches such as~\cite{FSW14} in the beeping models (therefore exchanging single bits per rounds) still remain at a $O(D\log n)$ bit round complexity.

In this paper, we present the first algorithm whose bit round complexity breaks the $O(D\log n)$ barrier, using essentially a new pipelining technique for spreading the identifiers. Our algorithm requires only $O(D + \log n)$ bit rounds and works in arbitrary synchronous networks. We show that this is optimal by combining a lower bound from~\cite{KPPRT15} and a recent communication complexity result by Dinitz and Solomon~\cite{DS07}.
This work is thus a step forward in understanding election, and a first (positive) answer to whether optimality can be
achieved
both in time and in the {\em amount} of information exchanged. (As opposed to measuring time on the one hand, and the number of messages {\em of a given size} on the other hand.) Incidentally, our results also
illustrate
the benefits of studying optimality under the unified lenses of bit round complexity.

\subsection{Contributions}

We present an election algorithm \STT, having time complexity of $O(D+\log n)$ with messages of size $O(1)$, where $D$ is the diameter of the network.
Algorithm \STT solves the {\em explicit} (i.e. strong) variant of the problem defined in~\cite{KPPRT15}, namely, the identifier of the elected node is eventually known to all the nodes. It also fulfills requirements from~\cite{DS07}, such as ensuring that every non-leader node knows which local link is in direction of the leader, and these nodes learn the maximal id network-wide ({\em MaxF}), as a by-product of electing this specific node in the {\em explicit} variant.

The global architecture of our algorithm follows a (now) classical principle, similar to that used e.g. by Gallager~\cite{Gallager} or Peleg~\cite{Peleg90}. It consists of a competition of spanning tree constructions that works by extinction of those trees originating at nodes with lower identifiers (see also Algorithm~4 in~\cite{Attiya} and discussion therein). Eventually, a single spanning tree survives, whose root is the node with highest identifier. This node becomes elected when it detects termination (recursively from the leaves up the root). Here the difficulty (and thus main contribution) arises from designing such algorithms with the extra constraint that only constant size messages are used. Of course, one might simulate $O(\log n)$-size messages in the obvious way paying $O(\log n)$ bit rounds for each message. But then, the bit round complexity remains $O(D \log n)$. In contrast, by introducing new pipelining techniques whose applicability extends the scope of the sole election problem, we take the complexity down to $O(D + \log n)$.

For ease of exposition, the $\cal STT$ algorithm is split into three components, whose execution is joint in a specific way.

\begin{enumerate}
\item A spreading algorithm $\cal S$ which pipelines the maximal identifier bitwise to every node, in a mix of battles (comparisons), conquests (progress of locally higher prefixes), and correction waves of bounded amplitude;
\item A spanning tree algorithm that executes in parallel of $\cal S$ and whose union with $\cal S$ is denoted $\cal ST$. It consists of updating the tree relations based on what neighbour brought the highest prefix so far;
\item A termination detection algorithm that executes in parallel of $\cal ST$ and whose union with $\cal ST$ is denoted $\cal STT$. This component enables the node with highest identifier (and only this one) to detect termination of
the spanning tree construction of which it is itself the root.
\end{enumerate}

An extra component can be added to broadcast a (constant size) termination signal from the root down the tree, once election is complete. This component is trivial and therefore not described here.

\subsubsection{Lower Bound:}

Dinitz and Solomon~\cite{DS07} prove a lower bound (Theorem~1 below) on the leader election problem among two nodes.
\begin{theorem}[\cite{DS07}]
Let $M$ be an integer such that $M\geq 2$.
Let $G$ be the graph with two nodes linked by an edge, and suppose that each node
has a unique identifier taken from the set $Z_M=\{0,\cdots, M\}$. The bit
round complexity
of the Leader task and of the MaxF version is exactly 
$2\lceil \log_2((M+2)/3.5)\rceil$.
\end{theorem}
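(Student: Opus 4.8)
The plan is to view the statement as a two‑party communication game and to prove matching upper and lower bounds while tracking all constants. Reformulate first: Alice holds $a\in Z_M$, Bob holds $b\in Z_M$ with $a\neq b$; in each bit round one endpoint sends one bit to the other (which endpoint, and which bit, being fixed by the protocol and the history); for MaxF both endpoints must eventually output $\max(a,b)$, and for the Leader task both must output which of them is larger. A first, easy observation is that the two variants have the same bit round complexity here (en route to declaring a leader the larger identifier has to be revealed), so it suffices to analyse MaxF. Two structural features will drive the exact constant $7/2$: identifiers are \emph{distinct}, so a configuration in which the set of still‑possible pairs is confined to a common two‑element set $\{x,y\}$ is \emph{already solved} --- each endpoint outputs $\max\{x,y\}$ without sending another bit; and, for MaxF, only the \emph{winner's} value must ultimately be transmitted, so the protocol may be adaptive and asymmetric about which endpoint commits first.

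For the upper bound I would build a recursive protocol whose unit of progress is a \emph{block} of two consecutive bit rounds --- one bit from Alice, then one bit from Bob. In a block the two endpoints compare $a$ and $b$ against a common threshold that splits the currently‑possible range of identifiers, and depending on the two answers the block either certifies the winner (and begins pinning down its value) or shrinks the range to roughly half its size. Let $a_k$ be the largest number of identifiers for which $2k$ bit rounds suffice. A careful accounting of one block --- exploiting the two features above --- yields $a_0=2$, $a_1=6$, and $a_k=2a_{k-1}+1$ for $k\geq 2$, hence $a_k+1=7\cdot 2^{k-1}$ for $k\geq 1$; equivalently $2k$ bit rounds suffice precisely when $M+2\leq 3.5\cdot 2^{k}$, i.e.\ $k\geq\log_2((M+2)/3.5)$. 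Choosing $k$ minimal gives the upper bound $2\lceil\log_2((M+2)/3.5)\rceil$. The non‑obvious points are $a_1=6$ rather than $4$ and the $+1$ in the recurrence; these are exactly where distinctness of identifiers and the ``only the winner's value'' asymmetry are cashed in, and where the constant $3.5$ appears.

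For the matching lower bound I would run an arbitrary correct protocol against an adversary and maintain the set $\mathcal C\subseteq Z_M^2$ of input pairs still consistent with the transcript produced so far. Correctness forces $\mathcal C$ to become \emph{resolved} at termination --- for every still‑possible pair, each endpoint must be able to name $\max(a,b)$ from the transcript and its own identifier --- and, by distinctness, this is what limits how small the remaining uncertainty can be: in particular the ``uncertainty about the comparison'' on each side must be pushed below the harmless two‑element threshold. The quantitative heart is that a single bit round (one bit from one endpoint) advances the relevant two‑sided potential by only a bounded amount, so that two bit rounds --- one from each endpoint --- are needed per halving; this forces the number of bit rounds to be at least $2\lceil\log_2((M+2)/3.5)\rceil$, and in particular always even. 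A naive rectangle or fooling‑set count gives only $\Omega(\log M)$ and is not tight, so this step genuinely needs the round‑sensitive adversary analysis, with the first block handled separately to recover the same constant $3.5$ as on the upper side.

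The asymptotics --- $\Theta(\log M)$ in either direction --- are routine; essentially all the work is in pinning the constant. On the upper side this means engineering the first block to be optimally economical (extracting $a_1=6$), and on the lower side it means running the adversary argument with no wasted step, including the parity phenomenon that the complexity is always even. Getting the two boundary analyses to meet exactly at $2\lceil\log_2((M+2)/3.5)\rceil$ --- rather than at $\Theta(\log M)$ or at a bound off by an additive constant --- is the main obstacle.
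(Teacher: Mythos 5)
This statement is not proved in the paper at all: it is Dinitz and Solomon's result, quoted verbatim from~\cite{DS07} and used only to extract the $\Omega(\log n)$ part of the lower bound for \STT. So there is no in-paper proof to compare against, and what you have written has to stand on its own as a reconstruction of the Dinitz--Solomon argument. As such it is a roadmap, not a proof. Your arithmetic is consistent with the stated constant: the recurrence $a_0=2$, $a_1=6$, $a_k=2a_{k-1}+1$ gives $a_k=3.5\cdot 2^k-1$, and since $|Z_M|=M+1$, solvability in $2k$ bit rounds for $M+1\leq 3.5\cdot 2^k-1$ is exactly the bound $2\lceil\log_2((M+2)/3.5)\rceil$. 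But the two places where you yourself locate the difficulty are precisely the places you do not argue: on the upper side, you assert $a_1=6$ and the ``$+1$'' in the recurrence without exhibiting the block protocol and verifying correctness when the surviving rectangle of input pairs degenerates to a common two-element set; on the lower side, the entire matching bound rests on an unnamed ``two-sided potential'' that a single transmitted bit ``advances by only a bounded amount,'' with the first block ``handled separately'' to recover the constant $3.5$. That potential function and its per-bit increment analysis are the heart of Dinitz and Solomon's paper; naming it is not supplying it, and nothing in your sketch rules out, say, a protocol in which one cleverly chosen bit makes more than the allotted progress. The reduction of Leader to MaxF (``the larger identifier has to be revealed'') is also asserted rather than shown, and it is directionally delicate since MaxF appears to demand more output than Leader.

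There is also a model issue you should confront explicitly. You recast the problem as a strictly sequential game in which exactly one endpoint sends one bit per round, and you infer that the complexity ``is always even.'' In the synchronous model of this paper, both endpoints may send a bit in the same round, and in~\cite{DS07} the quantity bounded is the number of bits transmitted (an asynchronous message-passing setting), so the factor $2$ in the formula does not by itself license a parity claim about rounds. Sequentializing a simultaneous protocol preserves the bit count and only adds adaptivity, so a lower bound in your game does transfer to total bits; but the exact-round statement you aim at, and in particular evenness, needs an explicit argument (or an explicit statement of which measure you are bounding). For the use made of the theorem in this paper none of this matters --- only $\Omega(\log n)$ is needed, and that survives the factor-of-two slack --- but as a proof of the exact statement the proposal has genuine gaps at its two quantitative pivots.
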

This theorem implies that  the time complexity of an election algorithm with
messages of size $O(1)$ (bit round complexity) is $\Omega(\log n)$.

On the other hand, the lower bound by Kutten et al. in~\cite{KPPRT15}, establishing that $\Omega(D)$ time is required with logarithmic size messages, obviously extends to constant size messages. Put together, these results imply that the bit  complexity of leader election
with messages of size $O(1)$ and identifiers of size $O(\log n)$ is $\Omega(D+\log n)$, which makes our algorithm bit round optimal (up to a constant factor).

In fact, the lower bound holds for arbitrary sizes $|id|$ of identifiers (necessarily larger than $\log n$, though, since they must be unique). Likewise, the complexity of our algorithm is expressed relative to identifiers of arbitrary sizes (see Theorem~\ref{th:main}). Hence, the bit round complexity of the election problem is in fact $\Theta(D + |id|)$.
Table~\ref{tab:result} summarises these elements, taking $|id|=O(\log n)$ as the most common (illustrative) value.

\begin{table}
\label{tab:result}
\centering
\begin{tabular}{|@{\,}l@{\,}|@{\,}c@{\,}|@{\,}c@{\,}|@{\,}c@{\,}|@{\,}c@{\,}|}
\hline
& Time & Number of messages & Message size & Bit round complexity \\
\hline
Awerbuch  \cite{Awer87}&  $O(n)$ & $\Theta(m+n\log n)$  & $O(\log n)$ & $O(n\log n)$
\\
\hline
Peleg  \cite{Peleg90}&  $\Theta(D)$  & $O(D\, m)$ & $O(\log n)$ & $O(D\log n)$ \\
\hline
This paper & $O(D+\log n)$   & $O((D+\log n) m)$ & $O(1)$ &$\Theta(D+\log n)$ 
 \\\hline
\end{tabular}
\medskip
\caption{Best known solutions in terms of time and number of messages, compared to our algorithm.}
\end{table}

\subsubsection{Outline:} After general definitions in Section~\ref{sec:model}, we present the three components of the algorithm: the spreading algorithm ${\mathcal S}$
(Section~\ref{sec:S}), its joint use with the spanning tree algorithm (${\mathcal ST}$, Section~\ref{sec:ST}), and the adjunction of termination detection (${\mathcal STT}$, Section~\ref{sec:STT}).
Further related works on the leader election problem are presented in Section~\ref{sec:related}.
We conclude in Section~\ref{sec:conclusion} with some remarks.

\section{Model and definitions}
\label{sec:model}

This section presents the network model (synchronous message passing, unique identifiers) and give the main definitions used throughout the paper regarding graph theory, language theory, and bit complexity.

\subsection{The Network}
We consider a failure-free message passing model
in a point-to-point
communication network described by a   connected graph 
$G=(V,E)$
where the nodes $V$ represent network processes (or nodes) and the edges $E$
represent bidirectional communication channels. Processes communicate
by message passing: a process sends a message to another by depositing
the message in the corresponding channel. 

Let $n$ be the size of $V$.
We assume that each node  $u$ is identified by a unique
positive integer of $O(\log n)$ bits, called identifier and denoted $Id_u$
(in fact, $Id_u$ denotes both the identifier and its {\em binary representation}).
 We do not assume any global knowledge on
the network, not even the size or an upper bound on the size, and the nodes do not require position or distance information. 
Every node is equipped with a port numbering function (i.e. a bijection between the set of incident edges $I_u$ and the integers in $[1,|I_u|]$), which allows it to identify which channel a message was received from, or must be sent to.
Two nodes $u$ and $v$ are said to be neighbours if they can communicate
through a port.

Finally, we assume the system is
fully synchronous, namely, all processes start at the same time
and time proceeds in synchronised rounds composed of the following three
steps: 
\begin{enumerate}
\item Send messages to (some of) the neighbours, 
\item Receive
messages from (some of) the neighbours, 
\item Perform local
computation.
\end{enumerate}
  The time complexity of an algorithm is the number of such rounds needed to complete the execution in the worst case.

\subsection{Further definitions}



The paper uses a number of definitions from graph theory and formal language theory. Although most readers may be familiar with them, we recall the most important ones. Then we define the bit round complexity.

\paragraph{Definitions on graphs:}  These definitions are selected from~\cite{Rosenpress} (Chapter 8). A tree is a connected acyclic graph. A rooted tree is a tree with one
 distinguished node, called the root, in which all edges are implicitly 
directed away from the root. 
A spanning tree of a connected graph $G=(V,E)$ is a tree $T=(V,E')$ such that
$E'\subseteq E$.
A forest is an acyclic graph. A spanning forest of a graph $G=(V,E)$ 
is a forest whose node set is $V$ and edge set is a subset of $E$.
A rooted forest is a forest such that each tree of the forest is rooted.
A child of a node $u$ in a rooted tree is an immediate
successor of $u$ on a path from the root.
A descendant of a node $u$ in a rooted tree is $u$ itself or any node that
is a successor of $u$ on a path from the root.
The parent of a node $u$  in a rooted tree is a node that is the immediate
predecessor of $u$ on a path to $u$ from the root.

\paragraph{Definitions on languages:} These definitions are selected from~\cite{Rosenpress} (Chapter 16).
Let $A$ be an alphabet, $A^*$ is
 the set of all words over $A$, the empty word is denoted by $\epsilon$.
If $x$ is a
non-empty
word of length $p$ over the alphabet $A$ then
$x$ can be written as the concatenation of $p$ letters, i.e.,
$x=x[1]x[2]\cdots x[p]$ with each $x[i]$ in $A$.
If $a\in A$ and $i$ is a positive integer then $a^i$ is the 
 concatenation  $i$ times of  the letter $a$.
For two words $x$ and $y$ over alphabet $A$, $x$
is said to be a prefix ({\it resp.} proper prefix) of $y$ if there exists a word ({\it resp.} non-empty word) $z$
such that $y=xz$.

\paragraph{Bit round complexity:}
The bit complexity in general may be viewed as a natural
extension of communication complexity (introduced by Yao \cite{Yao})
to the analysis of tasks in a distributed setting. An introduction to the area can be found in Kushilevitz and Nisan \cite{KN}. 
In this paper, we follow the definition from~\cite{KOSS}, that is, the bit round complexity of an algorithm $\cal A$ is the total number of {\it bit rounds} it takes for $\cal A$ to terminate, where a bit round is a synchronous round with single bit messages. 
This measure captures into a single quantity aspects that relate both to time and to the amount of information exchanged. Other definitions are considered in the literature, in~\cite{BNNN90,BMW,BT90,DMR08} the bit complexity is the total number of bits sent until global termination. In~\cite{sw11}, it is the maximum number of bits sent through a same channel.
In both variants, silences may convey
much information, which is why we consider the definition from~\cite{KOSS} in terms of {\em round} complexity as more comprehensive.

\section{A spreading algorithm}
\label{sec:S}
We present a distributed spreading
algorithm 
using only messages of size $O(1)$  which allows each node to know
the highest identifier among the set of all identifiers with a time
complexity of $O(D+\log n)$, where $D$ is the diameter of $G$. This algorithm is the main component of the \STT algorithm, standing for the 
${\cal S}$ in the acronym.

\subsection{Preamble}
 
Given
the binary representation $Id$ of an identifier,
we define 
$\alpha(Id)$ as the word
$$
\alpha(Id)=1^{|Id|} 0  Id.
$$

For instance, for the integer 23, $Id=10111$ and $\alpha(Id)=11111010111$. This encoding has the nice property that it extends the natural order $<$ of integers into a lexicographic order $\prec$ on their $\alpha$-encoding.



  \label{rem:lexicographic}
If $u$ and $v$ are two nodes with identifiers $Id_u$ and
$Id_v$,
$$
Id_u <  Id_v \Leftrightarrow  \alpha(Id_u) \prec \alpha(Id_v).
$$

As a result, the order between two identifiers $Id_u$ and $Id_v$
is the order induced by the first letter which differs in $\alpha(Id_u)$ and $\alpha(Id_v)$.
This property is key to our algorithm, in which the spreading of identifiers progresses bitwise based on prefix comparisons.

\subsection{The  algorithm $\cal S$}
We describe here the spreading component of the algorithm, {\it i.e.} the $\cal S$ in \STT, whose purpose is to spread the largest identifier network-wide. For simplicity, we present here the algorithm independently from termination detection, which is dealt with in a dedicated section (Section~\ref{sec:STT}).

\subsubsection{Variables:} Each node can be $active$ or $follower$, depending on whether it is still a candidate for becoming the leader (i.e. no higher identifier was detected so far). Each node $u$ also has variables $Y_u$, $Z_u$ and
$Z_{[u]v}$  (one for each neighbour $v$ of $u$) which are words over the alphabet $\{0,1\}$. $Y_u$ is a shorthand for $\alpha(Id_u)$, it is set initially and never changes afterwards. $Z_u$ is a prefix of $Y_w$, for some node $w$ (possibly $u$ itself). It indicates the highest prefix known so far by $u$.
On each node, this variable will eventually converge to the $\alpha$-encoding of the highest identifier. Finally, for each neighbour $v$ of $u$, 
$Z_{[u]v}$ is the latest value of $Z_v$ known to $u$. 

\subsubsection{Initialisation:} Initially every node $u$ is $active$, all the $Z_u$ are set to the empty word $\epsilon$, and the $Z_{[u]v}$ are accordingly set to the empty word.

\subsubsection{Main loop:}  In each round, the algorithm executes the following actions.
\begin{enumerate}[leftmargin=3em]
\item update $Z_u$ based on information received in the previous round,
\item send to all neighbours a signal indicating how $Z_u$ was updated,
\item receive such signals from neighbours,
\item update all the $Z_{[u]v}$ accordingly.
\end{enumerate}

\noindent
The main action is the update of $Z_u$ (step 1).
It depends on the values of $Z_{[u]v}$  for all neighbours $v$ and 
 $Z_u$ itself at the end of the previous round. 
This update is done according to a number of rules. For instance, as long as $u$ remains $active$ and $Z_u$ is a proper prefix of $Y_u$, the update consists in appending the next bit of $Y_u$ to $Z_u$. Most updates are more complex and detailed further below. The three other actions (step $2$, $3$, and $4$ above) only serve the purpose of informing the neighbours as to how $Z_u$ was updated, so that all $Z_{[u]v}$  are correctly updated. In fact, $Z_u$ can only be updated in {\em seven} possible ways, each causing the sending of a (constant size) particular signal among $\{append0, append1, delete1, delete2, delete3, change, null\}$, with following meaning:
\begin{itemize}
\item $append0$ or $append1$: $Z_u$ was updated by appending a single $0$ or a single $1$;
\item $delete1$, $delete2$, or $delete3$: $Z_u$ was updated by deleting one, two or three letters from the end;
\item $change$: $Z_u$ was updated by changing the last letter from $0$ to $1$;
\item $null$: $Z_u$ was not modified.
\end{itemize}

By the end of each round, it holds that $Z_{[u]v}=Z_v$ for all neighbours $v$ of $u$.
Thus from now on, $Z_{[u]v}$  is simply written $Z_v$. Another invariant is that, by the end of each round, if $u$ and $v$ are two neighbours, then $Z_u$ and $Z_v$ must have a common prefix followed, in each case, by at most six letters (see the proof of Lemma 6 below, second item).

We now describe the way $Z_u$ is updated by each node $u$ ({\it i.e.} step 1).
 
\subsubsection{Update of $Z_u$ in each round:}

Let us denote the state of some variable $X$ {\em at the end} of round~$t$ by $X^t$. For instance, we write $Z_u^0=\epsilon$, where round $0$ corresponds to initialisation.
The computation of $Z_u$ at round $t$ results from $u$ being active or follower, and from the values of 
 $Z_u^{t-1}$ and $Z_v^{t-1}$ for all neighbours $v$ of $u$. 
It is done according to the following rules given in order of 
priority, i.e., $R_{1}$ has a higher priority than 
$R_2$. 
(This does not apply between the subrules $R_{1.1}$ and $R_{1.2}$, for which a different criterion is specified.)
Whenever a rule is applied, the subsequent rules are ignored.
\begin{itemize}
\item[-$R_1$] (delete). The relationship between $Z_u^{t-1}$ and $Z_v^{t-1}$ 
for any neighbour
$v$ of $u$ may mean that a delete operation is possible. This may be done according to the two following subrules; if both are applicable, possibly relative to various neighbors, the one deleting the greatest number of letters is chosen. (In case of ties, the choice does not matter.)
\begin{itemize} 
\item[-$R_{1.1}$] If some $Z_v^{t-1}$ is a proper prefix of $Z_u^{t-1}$ and $v$'s 
last action was a $delete$,
$Z_u^t$ is obtained by deleting the last $min\{|Z_u^{t-1}|-|Z_v^{t-1}|,3\}$
letters of $Z_u^{t-1}$.
\item[-$R_{1.2}$] If $Z_u^{t-1}=z0x$ with $x\neq\epsilon$ and some $Z_v^{t-1}=z1y$, 
$Z_u^t$ is obtained by deleting the last
$|x|$ letters of $Z_u^{t-1}$;
\end{itemize}
\item[-$R_2$] (change). If $Z_u^{t-1}=z0$ and some $Z_v^{t-1}=z1y$ then 
$Z_u^{t} = z1$    and
$u$'s state becomes    $follower$ if it is $active$;

\item[-$R_3$] (append). If  for some $v$, $Z_v^{t-1}=Z_u^{t-1}1x$, then
$Z_u^{t}$ is obtained by appending $1$ to $Z_u^{t-1}$; 
\item[-$R_4$] (append). If for some $v$,  $Z_v^{t-1}=Z_u^{t-1}0x$, then
$Z_u^{t}$ is obtained by appending $0$ to $Z_u^{t-1}$;
\item[-$R_5$] (append). If $u$'s state is $active$ and $t \le|Y_u|$,
$Z_u^t$ is obtained by
appending $Y_u[t]$ to $Z_u^{t-1}$;
\end{itemize}

If none of these actions apply, then $Z_u^t=Z_u^{t-1}$ and a $null$ signal is sent. Otherwise, a signal corresponding to the resulting action is sent.
We now prove some properties on Algorithm~$\cal S$ including Corollary \ref{cor:bound3}
which shows that when $R_{1.2}$ is applied, $|x|\le 3$ so that the signal to be sent
is within our set of 7 signals.


\begin{lemma}
\label{lem:delete}
For all $t$, if a node $u$ carries out a $delete$ operation at round $t$,
$u$'s operation at round $t+1$ must be another $delete$ operation
or a change operation.
\end{lemma}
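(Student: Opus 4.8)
The plan is to argue by induction on $t$, distinguishing according to which of the two rules actually realises the $delete$ that $u$ performs at round $t$, namely $R_{1.1}$ or $R_{1.2}$, and fixing a corresponding witness neighbour $v$. The single structural fact I will use repeatedly is that, among the operations a node may perform, only a $delete$ can shorten its $Z$ variable: $R_2$ preserves $|Z_u|$, the three append rules increase it by one, and a $null$ leaves it untouched. A convenient consequence is that if $p$ is a \emph{proper} prefix of $Z_v^{t-1}$, then $p$ is still a prefix of $Z_v^t$ unless $v$ performs, at round $t$, a $delete$ bringing $|Z_v^t|$ below $|p|$ --- the only delicate point being that $R_2$ rewrites the last letter of $Z_v^{t-1}$, but that letter lies strictly beyond any proper prefix.

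For the case where $u$'s $delete$ at round $t$ is an instance of $R_{1.2}$: then $Z_u^{t-1}=z0x$ with $x\neq\epsilon$, some neighbour $v$ has $Z_v^{t-1}=z1y$, and $u$ sets $Z_u^t=z0$. I would then look at $Z_v^t$. If $z1$ is still a prefix of $Z_v^t$, then at round $t+1$ node $u$ holds $Z_u^t=z0$, whose last letter is a $0$, and $v$ holds $Z_v^t=z1\cdots$, so $R_2$ is applicable to $u$; since $R_2$ is only preceded in priority by the two $delete$ rules, the operation $u$ actually carries out at round $t+1$ is a $delete$ or a $change$. If $z1$ is no longer a prefix of $Z_v^t$, the structural fact forces $v$ to have performed a $delete$ at round $t$ with $|Z_v^t|\le|z|$, so $Z_v^t$ is a prefix of $z$ and hence a proper prefix of $Z_u^t=z0$; as $v$'s last action was a $delete$, rule $R_{1.1}$ applies to $u$ at round $t+1$, a $delete$. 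Note this case uses no induction hypothesis.

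For the case where $u$'s $delete$ is an instance of $R_{1.1}$: there is a neighbour $v$ with $Z_v^{t-1}$ a proper prefix of $Z_u^{t-1}$ whose action at round $t-1$ was a $delete$, and $u$ removes $\min\{|Z_u^{t-1}|-|Z_v^{t-1}|,3\}\ge 1$ letters, so $Z_v^{t-1}$ is a prefix of $Z_u^t$ (equal to it when the difference of lengths is at most $3$, a proper prefix of it otherwise). This is where the induction enters: since $v$ deleted at round $t-1$, the induction hypothesis applied to $v$ says that at round $t$ the node $v$ performs a $delete$ or a $change$. If $v$ deletes at round $t$, then $Z_v^t$ is a proper prefix of $Z_v^{t-1}$, hence a proper prefix of $Z_u^t$, and $v$'s last action is a $delete$, so $R_{1.1}$ applies to $u$ at round $t+1$. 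If $v$ changes at round $t$, write $Z_v^{t-1}=w0$, $Z_v^t=w1$: when $Z_u^t=w0$ we get that $R_2$ is applicable to $u$ (last letter $0$, neighbour $v$ showing $w1$), a $change$; and when $w0$ is a proper prefix of $Z_u^t$, i.e. $Z_u^t=w0x'$ with $x'\neq\epsilon$, then $R_{1.2}$ is applicable to $u$ (with $z=w$, $v$ showing $w1$), a $delete$. In both subcases a $delete$ rule of higher priority may preempt the rule I exhibited, which is harmless.

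Finally I would dispatch the base case and point out the crux. A $delete$ at round $t$ requires $Z_u^{t-1}\neq\epsilon$, which fails at round $0$; hence $R_{1.1}$, which additionally demands a neighbour that deleted at round $t-1$, is only ever applied at rounds $t$ with $t-1\ge 1$, so the induction hypothesis is always available when invoked, and at the smallest round where any $delete$ occurs only the self-contained $R_{1.2}$ analysis is needed. The main obstacle is precisely the $R_{1.1}$ case: a node applying $R_{1.1}$ has no local control over what its shrinking neighbour will do next round, so one cannot conclude by a purely local argument --- the claim has to be bootstrapped through its own instance one round earlier, which is exactly why induction on $t$ is the right framework rather than a static case analysis. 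Everything else is routine prefix bookkeeping together with a handful of edge cases ($y$ or $w$ empty, the two length regimes in $R_{1.1}$, and pinning down that ``$v$'s last action'' refers to $v$'s operation at the preceding round).
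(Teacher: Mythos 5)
Your proof is correct and follows essentially the same route as the paper's: induction on $t$, splitting on whether the delete at round $t$ was realised by $R_{1.1}$ (where the induction hypothesis is applied to the witness neighbour $v$) or by $R_{1.2}$ (handled by direct analysis), and concluding via applicability of $R_{1.1}$, $R_{1.2}$ or $R_2$ together with the rule priorities. Your organisation of the $R_{1.2}$ case by whether the prefix $z1$ survives in $Z_v^t$, and your explicit treatment of the base case, are only cosmetic refinements of the paper's argument.
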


\begin{proof}
By induction on $t$. For $t=1$, it is trivially true since there can be no
delete operation at round $1$.
Suppose $u$ makes a delete at time $t$.
The delete operation  carried out 
at round $t$ on $u$ was made
possible by one or more neighbours of $u$ according to rule
$1$. Let $v$ be one such neighbour.

\begin{itemize}
\item
If $R_{1.1}$ was applied at round $t$ on $u$:
\begin{itemize}
\item  $Z_v^{t-1}$ is a proper prefix of $Z_u^{t-1}$,
\item $v$ did a delete at round $t-1$.
\end{itemize}
Thus 
$Z_u^{t-1}=Z_v^{t-1}d$ (for some
non-empty $d$), and
$Z_u^{t}$ is obtained from $Z_u^{t-1}$  by erasing at most $|d|$
letters at the end, i.e., $Z_u^{t}=Z_v^{t-1} d'$ for some $d'$.

By induction, $v$'s action at round $t$ is another $delete$ operation
or a change.
\begin{itemize}
\item If it is a $delete$ operation, 
then $Z_v^{t}$  is again obtained 
by erasing some letters at the end of $Z_v^{t-1}$
thus it is a proper prefix of   $Z_v^{t-1}$ and 
a proper prefix
of $Z_v^{t-1}d'=Z_u^{t}$ and $R_{1.1}$ 
applies again at $(t+1)$ on $Z_u^{t}$.
\item 
If it is a change operation then $Z_v^{t-1}=w0$ (for some $w$), 
$Z_v^{t}=w1.$ Finally, $Z_u^{t}=Z_v^{t-1}d'=w0d'$
and  either $d'$ is a non empty word and $R_{1.2}$ applies with $y=\epsilon$
on $Z_u^{t}=Z_v^{t-1}d'=w0d'$,
or $d'$ is the empty word and
$R_2$ applies with $y=\epsilon$ on $Z_u^{t}=Z_v^{t-1}=w0$: 
$u$ will do a change at round $t+1$ unless another neighbour makes a delete possible.
\end{itemize}
\item
Otherwise $R_{1.2}$ was applied at round $t$ on $u$: 
\begin{itemize}
\item  $Z_u^{t-1}=w0x=$ with $x\neq \epsilon$,
\item $Z_v^{t-1}=w1y$ for some $v$, and 
\item $Z_u^{t}=w0$ by the delete operation at round $t$.
\end{itemize}
Then:
\begin{itemize}
\item If the operation at round $t$ on $Z_v^{t-1}$ was a delete operation,
according to whether
at least $1y$ is deleted or not, the operation on $Z_u^{t}$ at round 
$(t+1)$ is
a $delete$  or a $change.$
\item If the operation at round $t$ on $Z_v^{t-1}$ is a $change$, an $append$
or $null$, then the operation on $Z_u^{t}$ at round
$t+1$ is also a change (again unless
another neighbour makes a delete possible.)
\qed
\end{itemize}
\end{itemize}
\end{proof}

Lemma~\ref{lem:delete} induces immediately:
\begin{corollary}\label{delete-change}
A sequence of delete operations on a node $u$ ends with a 
change operation on $u$.
\end{corollary}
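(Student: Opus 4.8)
The plan is to combine Lemma~\ref{lem:delete} with the observation that a run of consecutive delete operations on a single node cannot go on forever, because each delete strictly shortens $Z_u$. First I would fix a round $t$ at which $u$ carries out a delete, and consider the maximal block of consecutive rounds $t, t+1, t+2, \dots$ on which $u$'s action is a delete. Inspecting the two delete rules shows that every delete removes at least one letter from the end of $Z_u$: rule $R_{1.1}$ removes $\min\{|Z_u^{t-1}|-|Z_v^{t-1}|,3\}\ge 1$ letters (since $Z_v^{t-1}$ is a \emph{proper} prefix of $Z_u^{t-1}$), and rule $R_{1.2}$ removes $|x|\ge 1$ letters (since $x\neq\epsilon$). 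As $|Z_u|$ is a nonnegative integer that does not increase within such a block, the block must be finite.

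Let $t'$ be the last round of this block, so that $u$ performs a delete at round $t'$ but not at round $t'+1$. By Lemma~\ref{lem:delete}, $u$'s operation at round $t'+1$ is either another delete or a change; since it is not a delete, it is a change. Hence the maximal block of deletes containing round $t$ is immediately followed by a change operation on $u$, which is precisely the claim. Since every delete operation lies in exactly one such maximal block, the statement holds for every sequence of delete operations on $u$.

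There is essentially no obstacle here: the corollary is a direct packaging of Lemma~\ref{lem:delete} together with a termination argument for the run of deletes. The only point worth double-checking is soundness of that termination argument, i.e.\ that no delete rule leaves $|Z_u|$ unchanged or increases it; this is immediate from reading $R_{1.1}$ and $R_{1.2}$. This is why the paper can assert that Lemma~\ref{lem:delete} "induces immediately" the corollary.
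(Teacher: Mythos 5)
Your proof is correct and follows exactly the reasoning the paper leaves implicit: the paper states the corollary as an immediate consequence of Lemma~\ref{lem:delete}, and your explicit termination argument (each delete strictly shortens $Z_u$, so a maximal run of deletes is finite, and by the lemma the next operation must then be a change) is just a careful spelling-out of that same step. No gap and no genuinely different route.
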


\begin{remark}\label{rem:active}
While a node $u$ remains active and has
not performed a delete, $Z_u$ cannot be a proper prefix of any $Z_v$.
If $u$ does perform a delete, by Corollary \ref{delete-change}, this delete
will be followed, possibly after other deletes, by a change.
That is the first rules applied to $u$ other than $R_5$ must be a
(possibly empty) sequence of deletes followed by an $R_2$. When this $R_2$ is applied,
$u$ ceases to be active.
\end{remark}

\begin{lemma}
\label{lem:prefix}
For all $t$,
for every vertex $u$, there is a vertex $w$ such that $Z_u^t$ is a prefix of
$\alpha(Id_w)$.
\end{lemma}

\begin{proof}
By induction on $t$.

By Remark \ref{rem:active}, while $u$ remains active,
$Z_u^t$ is a prefix of $\alpha(Id_u)$.
Whenever $Z_u$ changes by rules $R_1$, $R_2$, $R_3$ or $R_4$ as a result 
of a neighbour $v$, $Z_u^t$ is a prefix of $Z_u^{t-1}$ or $Z_v^{t-1}$.
\qed
\end{proof} 

In the following lemma and its proof, $a$ always stands for a single letter,
$0$ or $1$.
\begin{lemma}
\label{lem:voisins}
Let $u$ and $v$ be two neighbours. Let $t$ be a round number.
The words $Z_u^t$ and $Z_v^t$
will always take one of the following forms (up to renaming of $u$ and $v$)
where $p$ and $w$ are words.
\begin{enumerate}
\item $Z_u^t=p$ and $Z_v^t=p$,
\item  $Z_u^t=p$ and $Z_v^t=pw$ with $1\leq |w|\leq 2$,
\item $Z_u^t=p0$  and $Z_v^t=p1a$,
\item  $Z_u^t=p1$ and $Z_v^t=p0w$ and $|w|\leq 3$,
\item $Z_u^t=p$ and $Z_v^t=pw$ and $3\leq |w| \leq 6$ and $u$ performed
a delete in round $t$.
\end{enumerate}
\end{lemma}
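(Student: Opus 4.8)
The plan is to prove the statement by induction on the round number $t$, using the five forms as an invariant of every neighbouring pair. The base case $t=0$ is immediate: by the initialisation $Z_u^0=Z_v^0=\epsilon$, which is form~1 with $p=\epsilon$. For the inductive step I would assume that at the end of round $t-1$ the pair $(Z_u^{t-1},Z_v^{t-1})$ has one of the five forms, and show that after one application of the update rules at $u$ and at $v$ (simultaneously, each node acting on the values its neighbours had at the end of round $t-1$) the pair $(Z_u^{t},Z_v^{t})$ still has one of the five forms. Concretely this is a case analysis: for each of the five forms at round $t-1$, and for each pair of rules among $R_{1.1},R_{1.2},R_2,R_3,R_4,R_5$ (or ``no rule'') that $u$ and $v$ respectively apply, I compute the resulting pair and check it lies among forms 1--5 (up to renaming). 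Note that the rule applied by $u$ may be triggered by a neighbour other than $v$, but this still pins down the length and the last letters of $Z_u^t$, which is all the invariant talks about.

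To keep this case tree finite and shallow I would first record the effect of each rule on $Z_u$: $R_3$, $R_4$, $R_5$ each increase $|Z_u|$ by exactly $1$; $R_2$ leaves $|Z_u|$ unchanged and only flips its last letter from $0$ to $1$; $R_{1.1}$ and $R_{1.2}$ decrease $|Z_u|$ by $1$, $2$, or $3$ and leave a prefix of the old $Z_u$. Then I would lean on two structural facts already available. First, Corollary~\ref{delete-change} (via Lemma~\ref{lem:delete}): a node that did a delete at round $t-1$ does a delete or a change at round $t$; this is exactly what keeps form~5 from degenerating, since the node that ``has performed a delete'' is forced to keep shortening. Second, the remark stating that a node becomes a follower as soon as it is strictly shorter than one of its neighbours: hence an active node is never the shorter member of a pair, which rules out many a priori possible configurations and, in particular, forces $w=\epsilon$ whenever the active node sits on the $w$-bearing side in forms 2--5, and forces the two words to coincide up to their first disagreeing bit when $u$ and $v$ are both active.

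With these tools the bulk of the work is bookkeeping, and it is light for forms 1--4: an append on $u$ is triggered by a neighbour whose word is $Z_u$ followed by the appended bit, so the append preserves the prefix (or compatible-placement) relation; an $R_2$ change turns a ``$p0$'' side into ``$p1$'' and thus merely moves the pair among forms 1, 2, 3 and 4; and a single delete shortens one side by at most $3$, landing in form 2 or 5. The delicate part, which I expect to be the real obstacle, is form~5 ($Z_u^t=p$, $Z_v^t=pw$ with $3\le|w|\le 6$, $u$ having deleted). Here I would argue two things. \emph{Stability:} once $u$ has deleted and is a proper prefix of $Z_v$, rule $R_{1.1}$ necessarily fires on $v$ the next round and it deletes $\min(|w|,3)=3$ letters, so from that point on both sides only shrink; and before $v$ catches on it can have grown by at most $1$. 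Combining ``$u$ removes at least one letter (and $\min(|w|,3)$ once it is a proper prefix) each round'' with ``the other side grows by at most one letter until it too starts deleting'' gives that $|w|$ stays within $[3,6]$ until it drops below $3$, at which moment the pair re-enters form~2 (or 1). \emph{Entry:} form~5 is produced precisely when $u$ applies an $R_{1.1}$ delete of up to $3$ letters, triggered by a third neighbour, while the neighbour $v$ under consideration is still at its old length -- which lands exactly in the range $3\le |w|\le 6$ with the ``$u$ has performed a delete'' tag satisfied. The whole difficulty is thus combinatorial breadth rather than depth; the two structural facts above are what cut the number of genuinely different rule-pair combinations down to a manageable list.
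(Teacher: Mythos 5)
Your proposal is essentially the paper's own proof: induction on $t$ with an exhaustive case analysis of the five forms against the possible update operations at $u$ and $v$, using Corollary~\ref{delete-change} (a node that has just deleted must next delete or change) to control form~5, and obtaining within the induction that $R_{1.2}$ only ever deletes at most $3$ letters. The small slips in your narrative (form~5 can also be entered via $R_{1.2}$ or with $v$ appending, and it can exit to form~4 when $u$ performs a change on a $p$ ending in $0$; likewise the ``active node'' pruning remark is stated backwards) fall within the bookkeeping you explicitly defer and do not alter the approach.
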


\begin{proof}
By induction on $t$.

At round $t=0$, $Z_u^t=\epsilon$ and $Z_v^t=\epsilon$.

Without loss of generality, we will always consider the form given in the
lemma and not the reverse.


We consider the five possible relations between  $Z_u^{t-1}$ and $Z_v^{t-1}$
and show that in each case  $Z_u^t$ and $Z_v^t$  still have one of the five
forms.
\begin{enumerate}
\item $Z_u^{t-1}=p,~Z_v^{t-1}=p$.
Each node may carry out any operation. If each carries
 out the same  operation, we remain in case 1. 
\begin{itemize}
\item If $u$ does a $change$: $p=Z_u^{t-1}=p'0$ and $Z_u^t=p'1$.
\begin{itemize}
\item If  $v$ does a $delete$, the 
$delete$ of $v$ may be $R_{1.1}$ or $R_{1.2}$. 
\begin{itemize}
\item If it is $R_{1.1}$
$Z_v^{t-1} =p = p'0$ and $Z_v^t$ is obtained by truncating 1, 2 or 3 letters from
$p'0$, the same as truncating the same number of letters from $p'1 = Z_u^t$.
Thus $Z_v^t$ is a proper prefix of $Z_u^t$ giving case 2 if 1 or 2 letters were
deleted and case 5 otherwise.
\item If it is $R_{1.2}$ then
$p=Z_v^{t-1}=z0x$ and $Z_{v'}^{t-1}=z1y$ for some neighbour $v'$ of $v$ with
$x\neq\epsilon$. 
It corresponds to case 4, and thus the induction implies that
$y=\epsilon$ and $|x|\leq 3$.
Thus $p=Z_v^{t-1}=p'0=z0x$, $Z_v^t=z0$ and $z0$ is a prefix
of $p'$. Finally, $Z_u^t$
and $Z_v^t$ are linked again by relation 2 or 5.
\end{itemize}
\item If $v$ does  an $append$ or $null$ it is case 4.
\end{itemize}

\item If $u$ does a $delete$.
\begin{itemize}
\item If  $v$ does a $delete$ then one word will be a prefix of 
the other as in case 1 or case 2.
\item If $v$ does an $append$ or $null$ it is case 2 or case 5.
\end{itemize}
\item  In the remaining cases both do $append$ or $null$ it is either case 2 
or case 4.
\end{itemize}

\item $Z_u^{t-1}=p,~Z_v^{t-1}=pw$ and $(1\leq |w|\leq 2)$. 
\begin{itemize}
\item
 If $u$ does a $delete$ any operation on $v$ leaves one
word prefix of the other (again case 1, 2 or 5). 
Now, $|p|-3 \le |Z_u^t| < |p|$ and $|p|-2 \le |Z_v^t| \le |p|+3$ giving the claimed bounds of $2$ and $6$ in cases 2 and 5 respectively.
\item If $u$ does a $change$ and if  $v$ does a $delete$
we are in case 2 or 4. 
If $v$ does any other operation, then we are in case 4.
\item If $u$ does an $append$, it must be an $append1$ if the first bit 
of $w$ is 1 (since $R_3$ has priority over $R_4$).
If $v$ does an $append$  the result will be case 2
if $u$ appended the first bit of $w$ or case 4 otherwise (if the first bit of $w$ is $0$ and $u$ appended $1$).
If $v$ does $null$ the result will be case 2 or 4.
If $v$ does $delete$ the result will be case 1, 2, 4 or 5.
\end{itemize}

\item $Z_u^{t-1}=p0,~Z_v^{t-1}=p1a$. 
The node $u$ will do $change$ or $delete$ giving
$p1$  or a prefix of $p$ respectively, again leaving one word a prefix 
of the other for any operation carried out by $v$.
This gives case 1, 2 or 5.

\item $Z_u^{t-1}=p1,~Z_v^{t-1}=p0w$ and  $ |w|\leq 3$. 
\begin{itemize}
\item If $|w|>0$, $v$ will do a delete of at least $|w|$ bits ($R_{1.2}$). If $u$ does a delete or $v$ deletes more than $|w|$ bits, then one word is a prefix
of the other, leading to cases 1, 2 or 5.
If $u$ appended a bit and $v$ deletes $|w|$ bits, then we get case 3.
If $u$ did null and $v$ deletes $|w|$ bits, then we get case 4.
\item Otherwise $w=\epsilon$, $v$ will do a change 
or a delete leaving one word a prefix of the other. Again
this gives case 1, 2 or 5.

\end{itemize}
\item $Z_u^{t-1}=p,~Z_v^{t-1}=pw$, $3\leq|w| \leq 6$ 
where $u$ has just performed a delete.
Then $v$ will apply $R_{1.1}$ and do a $delete3$, and $u$ a $delete$ operation or
a $change$ operation (Lemma \ref{delete-change}).
leaving case 1, 2 or 5, or case 4 if the last bit
of $p$ is 0 and $u$ does a change.
\qed
\end{enumerate}
\end{proof}

Item 4 of lemma \ref{lem:voisins} is the only one which allows rule $R_{1.2}$ to be applied, leading to:
\begin{corollary}
\label{cor:bound3}
If $R_{1.2}$ is applied then
$0<|x|\leq 3$ and $y=\epsilon$.
\end{corollary}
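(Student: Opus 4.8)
The plan is to derive the statement entirely from Lemma~\ref{lem:voisins}, applied to the pair of neighbours that witnesses the application of the rule. Suppose $R_{1.2}$ is applied on a node $u$ at round $t$ through a neighbour $v$. By the statement of $R_{1.2}$ this means $Z_u^{t-1}=z0x$ with $x\neq\epsilon$ and $Z_v^{t-1}=z1y$ for the corresponding words $z,x,y$. The bound $|x|>0$ is then immediate, so the task reduces to proving $|x|\le 3$ and $y=\epsilon$.

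First I would note that $z$ is forced to be the \emph{longest} common prefix of $Z_u^{t-1}$ and $Z_v^{t-1}$: these two words coincide on their first $|z|$ letters and then disagree ($0$ against $1$), so neither is a prefix of the other. Feeding the pair $(u,v)$ into Lemma~\ref{lem:voisins} at round $t-1$, this observation immediately excludes forms~1, 2 and~5, in each of which one word is a prefix of the other. Hence the pair is in form~3 or form~4.

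Next I would rule out form~3. There the two words read $p0$ and $p1a$ (up to renaming), with $p$ their longest common prefix; the word carrying a $0$ right after the common prefix must be $Z_u^{t-1}$ (this is exactly what the decomposition $Z_u^{t-1}=z0x$ imposes, together with $z=p$), so $Z_u^{t-1}=p0$, which forces $x=\epsilon$, contradicting $x\neq\epsilon$. Therefore we are in form~4, $\{p1,\,p0w\}$ with $|w|\le 3$; again the ``$0$ after the common prefix'' side is $Z_u^{t-1}$, so $Z_u^{t-1}=p0w$ and $Z_v^{t-1}=p1$ with $p=z$. Comparing these with the decompositions $Z_u^{t-1}=z0x$ and $Z_v^{t-1}=z1y$ yields $x=w$ and $y=\epsilon$, hence $|x|=|w|\le 3$. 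Together with $x\neq\epsilon$ this gives $0<|x|\le 3$ and $y=\epsilon$.

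The only point requiring a little care is the ``up to renaming of $u$ and $v$'' clause in Lemma~\ref{lem:voisins}: one must check in forms~3 and~4 that it is indeed $u$, and not $v$, that occupies the side whose word has a $0$ immediately after the common prefix. This is the step I would write out most explicitly, using the shape $Z_u^{t-1}=z0x$, $Z_v^{t-1}=z1y$ dictated by $R_{1.2}$; the rest is a direct reading of the case analysis already carried out for Lemma~\ref{lem:voisins}, and indeed matches the parenthetical remark made there that ``if $R_{1.2}$ is applied then necessarily $y=\epsilon$ and $|x|\leq 3$''.
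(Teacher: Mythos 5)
Your proof is correct and follows essentially the same route as the paper, which simply observes that the configuration required by $R_{1.2}$ can only match item~4 of Lemma~\ref{lem:voisins}, forcing $y=\epsilon$ and $0<|x|\leq 3$. Your write-up merely makes explicit the elimination of the other forms and the ``up to renaming'' issue, which the paper leaves implicit.
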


Lemma \ref{lem:voisins} implies:
\begin{theorem}\label{th:complexity}
Let $G$ be a graph of size $n$ and diameter $D$ such that each node $u$
is endowed with a unique identifier $Id_u$ which is a
non-negative
integer.
Let $X$ be the highest identifier.
After at most $|\alpha(X)|+2 D$ rounds,  algorithm $\mathcal S$ terminates
(that is, after this time no node does any operation other than null)
and for each node $u$, $Z_u=\alpha(X)$.
\end{theorem}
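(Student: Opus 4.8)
The plan is to split the statement into three parts: (i) the behaviour of the node holding the maximal identifier, (ii) a distance-based propagation argument showing that the correct prefix reaches every node in time, and (iii) a stability check that yields termination.

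First I would isolate the node $r$ with $Id_r = X$. By Remark~\ref{dominer}, executing any of $R_{1.1}, R_{1.2}, R_2, R_3, R_4$ at a node requires a neighbour with a strictly larger $Y$-value; since, by Remark~\ref{rem:lexicographic}, $Y_r = \alpha(X)$ is the lexicographically largest $\alpha$-encoding present in the graph, no neighbour of $r$ qualifies, so $r$ never applies any of these rules. In particular $r$ never becomes a follower, hence as long as it has not yet spelled out all of $Y_r$ the only enabled rule at $r$ is $R_5$, which appends the next bit of $Y_r$. Therefore $Z_r$ grows by one bit per round, equals $\alpha(X)$ after at most $|\alpha(X)|$ rounds, and stays equal to $\alpha(X)$ forever afterwards.

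The core is then the claim that for every node $u$ at distance $d$ from $r$ and every round $t \ge |\alpha(X)| + 6d$ one has $Z_u^t = \alpha(X)$; since every node is at distance at most $D$ from $r$, this gives the announced bound. I would prove it by induction on $d$, the base case $d=0$ being the previous paragraph, so that the inductive step amounts to showing that within $6$ rounds after every node at distance $\le d$ has frozen to $\alpha(X)$, each node at distance $d+1$ is also frozen to $\alpha(X)$. The ingredients are Lemma~\ref{lem:voisins} (for any edge $\{u,v\}$ the words $Z_u^t$ and $Z_v^t$ share a common prefix followed by at most six discrepant letters, which is the source of the constant $6$ per hop), Corollary~\ref{delete-change} together with Lemma~\ref{lem:delete} (a \emph{correction wave} of $delete$ operations at a node has bounded length, as it must terminate with a $change$), and again Remark~\ref{dominer} (after round $|\alpha(X)|$ rule $R_5$ is disabled everywhere, so from then on a node can alter an already-built prefix only under the pressure of a neighbour with a strictly larger $Y$-value, which makes the correct prefix behave as a monotone front emanating from $r$ once the bounded corrections are accounted for). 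Putting these together, once a settled neighbour of $u$ carries $\alpha(X)$, the only moves left for $u$ are appending a correct bit (via $R_3$ or $R_4$ using that neighbour) or deleting stale suffix material down to a prefix of $\alpha(X)$, the latter being globally bounded by Lemma~\ref{lem:voisins}.

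The step I expect to be the real obstacle is a transient phenomenon inside the inductive step: a node $u$ whose $Z_u$ already equals a correct prefix $z0$ of $\alpha(X)$ can be forced by a not-yet-converged neighbour holding $z1y$ to perform a $change$ to $z1$, momentarily losing one correct bit (cases~3 and~4 of Lemma~\ref{lem:voisins}). To absorb this I would strengthen the induction hypothesis so that, beyond distance $d$, it also carries a lower bound on the length of the correct prefix of $Z_u^t$ together with an upper bound on the amount of incorrect material still in transit across $u$'s incident edges; a case analysis over the five configurations of Lemma~\ref{lem:voisins} then shows that each such one-bit regress is recouped within the $6$-round-per-hop allowance and cannot cascade. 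Finally, once $Z_u = \alpha(X)$ at every node, the configuration is stable: for $t \ge |\alpha(X)|$ rule $R_5$ is disabled, and with all $Z$-values equal none of $R_1$--$R_4$ is enabled, so every node sends $null$ from then on; hence $\mathcal S$ terminates with $Z_u = \alpha(X)$ for all $u$ no later than round $|\alpha(X)| + 6D$.
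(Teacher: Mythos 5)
Your proposal follows the same route as the paper's proof: isolate the maximum-identifier node, show via Remark~\ref{dominer} that it only ever applies $R_5$ and hence reaches and keeps $Z=\alpha(X)$ after $|\alpha(X)|$ rounds, then induct on the distance from that node with an allowance of $6$ rounds per hop extracted from Lemma~\ref{lem:voisins}; at the crux (a frozen neighbour forces convergence within $6$ rounds) your level of detail matches the paper's, which likewise asserts this after inspecting the five forms. The one genuine divergence is your handling of what you call the real obstacle, and there the paper is both simpler and more accurate: the scenario you fear cannot occur. Since every $Z$-value is at all times a prefix of some $\alpha(Id_w)$, a neighbour holding $z1y$ while $z0$ is a prefix of $\alpha(X)$ would give $\alpha(Id_w)\succ\alpha(X)$, hence $Id_w>X$ by Remark~\ref{rem:lexicographic}, contradicting maximality; so no correct bit of $\alpha(X)$ is ever sacrificed to a change, in cases~3 and~4 of Lemma~\ref{lem:voisins} the word agreeing with $\alpha(X)$ is always on the ``1'' side, and the strengthened induction hypothesis you plan (lower bound on the correct prefix plus a bound on in-transit garbage) is unnecessary. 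The paper exploits exactly this maximality to discard the deviating form and conclude that the remaining forms (lag of at most six letters, or a short $0w$ branch removed by one $R_{1.2}$ delete followed by a change) close the gap within $6$ rounds. If you do want to tighten the inductive step beyond what the paper writes, the phenomenon to control is not a change against a correct prefix but $R_{1.1}$ deletes propagated from still-lagging third neighbours (bounded via Lemma~\ref{lem:delete} and case~5 of Lemma~\ref{lem:voisins}); both your sketch and the paper absorb this into the $6$-round-per-hop budget without explicit bookkeeping.
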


\begin{proof}
Let $u_0$ be the node endowed with the highest identifier. 
Let $k$ be a non negative integer.
By induction on $k$ we prove that after at most $|\alpha(X)|+2 k$ rounds
each node at distance at most $k$ from $u_0$
has $Z=\alpha(X)$.
If $u_0$ ever does a change, let the first round at which this happens be $t$.
Then by Corollary \ref{delete-change} $Z_{u_0}^t$ is lexicographically greater
than $\alpha(X)$ but by Lemma \ref{lem:prefix} $Z_{u_0}^t$ is a prefix of some
$\alpha(Y_w)$ contradicting the fact that $X$ is the highest identifier.
Hence
$u_0$ can never cease to be active,
and so
as long as $|\alpha(Id_{u_0})|>t$,
$u_0$ applies $R_5$ at round $t$.
It follows that the Theorem is true for $k=0$.

For the inductive step,
we assume that each node at distance at most $k$ from $u_0$
has $Z^{|\alpha(X)|+2 k}=\alpha(X)$.
Let $v$ be a node at distance $k+1$ from $u_0$. Let $u$ be a node
at distance $k$ from $u_0$ and neighbour of $v$.
By induction, $Z_u^{|\alpha(X)|+2 k}=\alpha(X)$.
Once a node $u$ has $Z_u=\alpha(X)$, it will never do any operation other than null
because that would lead, possibly after a sequence of deletes, to a $Z_u$
lexicographically greater than $\alpha(X)$ contradicting Lemma \ref{lem:prefix}.
From Lemma \ref{lem:voisins} and knowing
that  $Z_u=\alpha(X)$ where $X$ is the highest identifier,
we deduce that  words $Z_u$ and $Z_v$
will always take one of the following forms 
at round $|\alpha(X)|+2 k$ where $p$ and
$w$ are words and $a$ is the bit $1$ or the bit $0$:
(In all cases except the first $Z_u$ must be the lexicographically greater of the two.)
\begin{enumerate}
\item $Z_u=p$ and $Z_v=p$,
\item  $Z_v=p$ and $Z_u=pw$ with $1\leq |w|\leq 2$,
\item $Z_v=p0$  and $Z_u=p1a$,
\item  $Z_u=p1$ and $Z_v=p0w$ and $|w|\leq 3$,
\item $Z_v=p$ and $Z_u=pw$ and $3\leq |w| \leq 6$ and $v$ has just performed
a delete.
\end{enumerate}
The fifth form is impossible since it would lead to $v$, possibly after a sequence
of deletes, doing a change resulting in $Z_v\succ \alpha(X)$.
The first form has $Z_v$ already equal to $\alpha(X)$.
In the second and third forms, similarly $v$ cannot do a delete (or in the second form
a change) because that would lead
eventually to $Z_v\succ \alpha(X)$, so $Z_v$ becomes equal to $Z_u$ after,
respectively, $|w|$ appends or a change and an append (of $a$).
In the fourth form, $v$ will do a delete of $|w|$ letters followed by a change to arrive at
$Z_u$.

Hence, after at most $|\alpha(X)|+2 (k+1)$ rounds $Z_v=\alpha(X)$ and the
result follows.
\qed
\end{proof}

\section{A Spanning Tree Algorithm}
\label{sec:ST}
This section explains how the computation of a spanning tree may be associated 
to the spreading algorithm $\mathcal S$ by selecting for each node $u$
the edge through which $Z_u$ was modified.

Let $u$ be a node, we add for each neighbour $v$,  a variable 
$status_u^v$ whose possible values are in $\{child,parent,other\}$, which indicates the status of neighbour $v$ at node $u$; initially 
$status_u^v=other$.
The computation of the spanning tree occurs

concurrently with the spreading algorithm $\cal S$ as follows. If $R_2$, $R_3$, or $R_4$ is applied at round $t$ relative to neighbour $v$, then $u$

chooses $v$ as parent (if not already the case).
$v$ is chosen arbitrarily among those of $u$'s neighbours justifying the rule applied.
Then, in addition to the signals of the spreading algorithm (indicating how $Z_u$ was updated), $u$ sends a signal $parent$ to $v$ and a signal $other$ to its previous parent
$v'$
(if different from $v$), and it sets $status_u^{v'}$ to other (so that it never has more than one parent). As a result, $v$ sets $status_v^u$ to $child$
and $v'$ sets $status_{v'}^u$ to $other$.

After  receiving signals from neighbours,
in addition to the computation of the new value of $Z_v$ for each neighbour
$v$ by Algorithm 
$\mathcal S$, $u$ updates $status_u^v$.
Algorithm $\mathcal {ST}$ 
denotes the algorithm obtained with Rules of the spreading
algorithm $\mathcal S$ and actions described just above.

\begin{remark}\label{rem9}
By Remark \ref{rem:active}, the first rules applied to $u$ other than $R_5$ must be a
(possibly empty) sequence of deletes followed by an $R_2$. When this $R_2$ is applied,
$u$ ceases to be active and acquires a parent. Thus a
node has no parent if and only if it  is active.
\end{remark}
\begin{remark}\label{15}
A node has at most one parent.
\end{remark}

The next definition introduces for each node $u$ a
pair
$P_u$ that is
used to prove that the graph induced by all the $parent$ relations has no cycle.
\begin{definition}
Let $u$ be a node, let $t$ be a round number of the spreading
algorithm $\mathcal S$; $M_u^t$ is equal to the maximum of $Z_u^{t'}$ for $t'\le t$,
$f_u^t$ is the minimum $t'$ such that $Z_u^{t'}=M_u^t$ and $P_u^t$ is the pair
$(M_u^t,f_u^t)$.
\end{definition}
Every change in $Z_u$ produces a $Z_u^{t+1}$ which is lexicographically greater than $Z_u^t$, except a delete which produces a $Z_u^{t+1}$ which is a proper prefix of $Z_u^t$.
Hence $Z_u^t = M_u^t$ unless $u$ did a delete at round $t-1$ when $Z_u^t$ is a
proper prefix of $M_u^t$.

\begin{definition}
Let $P_1$ and $P_2$ be two pairs $(M_1,f_1)$ and $(M_2,f_2)$.
$P_1 > P_2$ iff $M_1 \succ M_2$ or $(M_1=M_2$ and $f_1<f_2)$.
\end{definition}
In this order, $P_u^t$ is monotonic non-decreasing in $t$.
The following establishes that any non-active node has a parent with a greater value of $P$. 
\begin{lemma}\label{18}
Let $t$ be a round number. Let 
$u_1$ be a node. Then either $u_1$ is active or there exist 
$(u_i)_{1\leq i \leq p}$  nodes of $G$
such that: for $2\leq i \leq p$ $u_i$ is parent of $u_{i-1}$ and $u_p$ 
is active.
\end{lemma}
\begin{proof}
  If $v$ becomes parent of $u$ at round $t$,
  then $u$ has done a {\it change} or an {\it append} at $t$.
Inspecting rules $R2, R3$ and $R4$ show that  $Z_u^t \prec Z_v<^{t-1}$.
$Z_u^t = M_u^t$ since $u$ did not do a delete at $t-1$ and $Z_v^{t -1} \prec M_v^{t-1}$ 
by the definition of $M$ as a maximum.
Hence $M_u^t \prec M_v^{t-1}$ and round $t$ is the first at which $Z_u^t$ has attained
this value
and so $P_v^{t-1}>P_u^t$. Then $P_v$ remains greater than $P_u$ until $Z_u$ next
increases and a node $v'$ (possibly the same as $v$) becomes parent of $u$.
Hence if $v$ is parent of $u$ at the end of round $t$, $P_v^t > P_u^t$.\qed
\end{proof}

\begin{definition}
We say that algorithm ${\mathcal {ST}}$ terminates when algorithm ${\mathcal {S}}$ terminates,
that is no rules of algorithm ${\mathcal {S}}$ apply at any node.
We denote by ${\mathcal {ST}}(G)$ 
the subgraph of $G=(V,E)$ having $V$ as node set and such that
there is an edge between the node $u$ and the node $v$ 
if $u$ is the parent of $v$ or $v$ is the parent of $u$ when algorithm
${\mathcal {ST}}$ terminates.
\end{definition}

By Theorem \ref{th:complexity},
when Algorithm  $\mathcal {ST}$ terminates there is exactly one $active$ node:
the node with highest identifier.
Now, from Remark \ref{rem9} and \ref{15}, and Lemma~\ref{18}:
\begin{proposition}\label{21}
Let $G$ be a connected graph such that each node has a unique
identifier. Let $u$ be the node with the highest identifier.
When algorithm $\mathcal {ST}$ terminates,
the graph ${\mathcal {ST}}(G)$ is a spanning tree of $G$ with $u$ as root.
\end{proposition}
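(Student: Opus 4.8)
The plan is to show that $\mathcal{ST}(G)$ is connected, acyclic, and spans $V$, which together with the fact that it is a subgraph of $G$ gives that it is a spanning tree. Spanning is immediate since by definition $\mathcal{ST}(G)$ has node set $V$. Acyclicity will follow from Corollary~\ref{nocycle}: I would argue that any cycle in $\mathcal{ST}(G)$ would, by Remark~\ref{15} (each node has at most one parent) and Corollary~\ref{18}, induce a chain of nodes each of which is the parent of the previous one, eventually returning to its start, contradicting Corollary~\ref{nocycle}. More precisely, in an undirected cycle $v_1, v_2, \ldots, v_k, v_1$ of $\mathcal{ST}(G)$, every edge $\{v_i, v_{i+1}\}$ is present because one endpoint is the parent of the other; since each node has at most one parent, one cannot have two consecutive edges both ``pointing away'' from a common node, so the parent relation must be consistent around the cycle (all in one direction), which forces a cyclic chain of parents, contradicting Corollary~\ref{nocycle}. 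Hence $\mathcal{ST}(G)$ is a forest.

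For connectivity, I would first invoke Theorem~\ref{th:complexity} to ensure that algorithm $\mathcal{ST}$ terminates (it terminates exactly when $\mathcal{S}$ does, since the extra status signals are deterministic functions of the $\mathcal{S}$-actions), and that at termination $Z_u = \alpha(X)$ for every node $u$, where $X$ is the highest identifier. Next I would justify the remark made just before the statement: at termination exactly one node is active, namely the node $u^\star$ with the highest identifier. Indeed, $u^\star$ can never apply $R_{1.1}, R_{1.2}, R_2, R_3,$ or $R_4$, because by Remark~\ref{dominer} each of these requires the existence of a node $v$ with $Y_{u^\star} \prec Y_v$, which is impossible; so $u^\star$ only ever applies $R_5$ (or sends $null$), and in particular never changes state from $active$. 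Conversely, any node $u \neq u^\star$ has $Z_u = \alpha(X) = Y_{u^\star} \succ Y_u$ at termination, so at some round $u$ must have modified $Z_u$ beyond $Y_u$, which can only happen via $R_2$, $R_3$, or $R_4$ (the append rule $R_5$ only appends bits of $Y_u$ and only while $t < |Y_u|$); applying $R_3$ or $R_4$ sets a parent, and $R_2$ sets the state to $follower$ and is by Corollary~\ref{delete-change} eventually reached whenever a delete occurred — in all cases $u$ ends up non-active. By the Remark after the definition of $\mathcal{ST}$, a node has no parent iff it is active, so $u^\star$ is the unique node with no parent and every other node has exactly one parent (Remark~\ref{15}).

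Now connectivity follows by a counting/ancestry argument. By Corollary~\ref{18}, from any node $u_1$ there is a chain $u_1, u_2, \ldots, u_p$ with $u_i$ the parent of $u_{i-1}$ and $u_p$ active; since $u^\star$ is the only active node, $u_p = u^\star$, so every node is connected to $u^\star$ in $\mathcal{ST}(G)$, hence $\mathcal{ST}(G)$ is connected. Combined with acyclicity and the fact that its vertex set is $V$ and its edges lie in $E$, this shows $\mathcal{ST}(G)$ is a spanning tree of $G$. The main obstacle I anticipate is the ``exactly one active node at termination'' claim: it requires carefully tracking which rules a non-maximal node must eventually apply to reach $Z_u = \alpha(X)$, and invoking Corollary~\ref{delete-change} to handle the delete-then-change case; the acyclicity argument, while needing a small amount of care to translate an undirected cycle into a directed parent-chain, is otherwise a direct consequence of Corollary~\ref{nocycle}.
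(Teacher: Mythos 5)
Your proof follows essentially the same route as the paper, which derives the proposition directly from the unique surviving $active$ node (the one with highest identifier), Remark~\ref{15}, and Corollary~\ref{18}; you merely spell out the details the paper leaves implicit (the acyclicity argument via Corollary~\ref{nocycle} and the justification that every non-maximal node ends up $follower$). The added detail is correct, so there is nothing to fix.
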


\section{Termination Detection of Algorithm $\mathcal {ST}$}
\label{sec:STT}
This section presents some actions  which, added to algorithm
${\mathcal {ST}}$, enable  the node with the highest identifier to detect
termination of algorithm ${\mathcal {ST}}$; furthermore, as it is the only one,
when it detects the termination it becomes elected.
Our solution is a bitwise adaptation of the propagation process with feedback introduced in \cite{S} and further formalised and studied in Chapter~6 and~7
of~\cite{Tel}. 

\begin{definition}
Let $v$ be a node. Let $t$ be a round number of the spreading
algorithm. The variable $Z_v^t$ is said to be
well-formed if there exists an identifier $Id$
 such that $Z_v^t=\alpha(Id)$.
\end{definition}

To determine if $Z_v^t$ is well-formed, node $v$ can check whether $\mid Z_v^t\mid = 2j+1$, where $j$ is the number of $1$'s before the first $0$.
Each node $v$ is equipped with a boolean variable 
$Term_v$ which is $true$ iff $v$ and all of its subtrees have terminated.
Whenever a rule of the spreading algorithm is applied to node $v$
or a node $u$ becomes a child of $v$,
the variable $Term_v$ is set to $false$, and a signal is sent to its 
neighbours to indicate that $Term_v=false$. Indeed, this variable can be updated
several times for a same node before stabilizing to $true$. 

We describe an extra rule to be added to the 
$\cal ST$ algorithm in order to allow the node
with highest identifier to learn that it is so by detecting
termination of the spanning tree  algorithm.
This rule is considered {\em after} those of algorithm $\cal ST$ in each round.
Let us denote by $N_v$ the set of neighbours of $v$, and by $Ch_v \subseteq N_v$ those which are $v$'s children. Also recall that we omit the round number in the expression on variables when it is non-ambiguous.

\paragraph{The rule:} Given a node $v$, if ($v$ is follower) and ($Term_v=false$)
and ($Z_v$ is well-formed)
and ($\forall w\in N_v$ $Z_w=Z_v$) and ($\forall w\in Ch_v$ $ Term_w=true$)
then $Term_v:=true$.
Furthermore $v$ sends to his parent a signal indicating that $Term_v=true$.

We denote by \STT
 the algorithm obtained  by putting  together the rules
 of Algorithm $\mathcal {ST}$ and this extra rule for termination detection.

Whenever $Z_v$ changes, a rule of algorithm $\mathcal {S}$ has been applied to $v$
 and so $Term_v$ is set to false. Thus
if $Term_v=true$  then $Z_v$ has the same value  it had when $Term_v$
became $true$ the last time.

We say that algorithm \STT terminates when every node $v$ other than the node $u$
with the greatest identifier has $Z_v=\alpha(Id_u)$ and $Term_v=true$ and so
no node has any actions applicable.

From Theorem~\ref{th:complexity} and Proposition~\ref{21}, we know that ST will terminate, at which time all leaves of the constructed spanning tree will have $Term = true$, and a termination signal takes at most $D$ rounds to propagate to the root. This implies:
\begin{proposition}
  \label{prop:STTterm}
Let $G$ be a graph such that each node has a unique (integer) identifier. 
Algorithm \STT terminates within $D$ rounds after the termination of $\mathcal{ST}$. Furthermore,
if the node $u$ has the highest identifier then,  after a run of 
algorithm \STT,  for each neighbour $v$ of 
$u$ $Z_v=\alpha(Id_u)$  and $Term_v=true$ and 
the node $u$ receives from each node $v$ in $N_u$ the signal 
indicating that $Term_v=true$.
\end{proposition}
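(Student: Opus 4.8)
The plan is to reduce everything to what happens \emph{after} the spreading algorithm $\mathcal S$ has fully terminated, and then to run a bottom-up induction along the spanning tree for the termination-detection rule. By Theorem~\ref{th:complexity} there is a round $T \le |\alpha(X)|+6D$ (here $X=Id_u$) such that for every round $t\ge T$ no rule $R_{1.1},\dots,R_5$ is applied at any node and $Z_v=\alpha(X)=\alpha(Id_u)$ for every node $v$; this already proves $Z_v=\alpha(Id_u)$ for every neighbour $v$ of $u$. Since $R_2$, $R_3$, $R_4$ never fire after round $T$, no $status$ variable changes after $T$, so by Proposition~\ref{21} the parent relations are frozen from round $T$ on and form a spanning tree of $G$ rooted at the unique active node $u$, every other node being a follower. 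Finally, since $Term_v$ is reset to $false$ (and the corresponding ``$Term=false$'' signal sent) \emph{only} when a rule of $\mathcal S$ is applied to $v$, after round $T$ no $Term$ variable is ever reset and no ``$Term=false$'' signal is ever sent again.

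First I would settle the leaves. Fix a leaf $v$ of the spanning tree (so $v\neq u$ and $Ch_v=\emptyset$) and any round $t>T$. Then $v$ is a follower, $Z_v=\alpha(Id_u)$ is well-formed, $Z_w=Z_v$ for every neighbour $w$ of $v$, and the condition on children holds vacuously; hence if $Term_v$ is still $false$ the termination rule sets it to $true$ and sends the ``$Term=true$'' signal to $v$'s parent, and once $true$ it stays $true$. Thus from round $T+1$ on, every leaf $v$ has $Term_v=true$ permanently and the last signal it sent to its parent was a ``$Term=true$'' signal.

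Next I would do the inductive step on the height $h(v)$ of the subtree rooted at a node $v\neq u$. Assume that for every child $w\in Ch_v$ there is a finite round $t_w$ from which $Term_w=true$ holds permanently and from which $v$'s record of $w$ is ``$Term_w=true$'' (one round after $w$ last sets $Term_w$ to $true$). For every round $t\ge\max(T+1,\max_{w\in Ch_v}(t_w+1))$ all preconditions of the termination rule at $v$ hold, so $Term_v$ is $true$ from that round onward, stays $true$, and $v$ sends the ``$Term=true$'' signal to its parent. By induction this holds for every node $v\neq u$, and since the tree spans $V$, $Term_v=true$ eventually and permanently for every neighbour $v$ of $u$; moreover each child $w\in Ch_u$ delivers to $u$ the signal indicating $Term_w=true$. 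Once all $Term$ variables have stabilised to $true$ (which takes at most $O(n)$ rounds after $T$, since $h(v)\le n$), the termination rule can no longer fire and, together with the fact that $\mathcal S$ has terminated, no node changes any variable or sends any non-$null$ signal; hence \STT terminates.

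The step I expect to need the most care is the interaction between the oscillations of $Term$ during the first $T$ rounds and the signalling: before $\mathcal S$ stabilises, a node may set $Term$ to $true$ on a locally consistent but not yet maximal prefix, only to have it reset when a conquest or a correction wave arrives. Anchoring the argument at round $T$ — after which Theorem~\ref{th:complexity} gives global agreement on $\alpha(Id_u)$ and rules out any further $\mathcal S$-rule, hence any further reset — is what makes the bottom-up induction clean; one only has to observe that ``$Term=false$'' signals are flushed by round $T$ as well, since such signals are emitted exactly when an $\mathcal S$-rule is applied.
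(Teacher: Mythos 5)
Your proof is correct and follows essentially the same route as the paper: the paper states this proposition as a direct consequence of Theorem~\ref{th:complexity} (global stabilisation of $\mathcal S$ on $\alpha(Id_u)$) and Proposition~\ref{21} (the parent relations form a spanning tree rooted at the unique active node), and your bottom-up induction on the tree after the stabilisation round $T$ is exactly the convergecast argument this implication relies on. Your added care about $Term$ oscillations and stale ``$Term$'' signals before round $T$ (every $\mathcal S$-rule application resets $Term$ and any parent change coincides with such an application) is a useful explicit justification of details the paper leaves implicit, but it does not constitute a different approach.
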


The next proposition
establishes
that only the node with highest identifier
can receive a termination signal from all neighbours.

\begin{proposition}\label{prop:term}
Let $G$ be a graph such that each node has a unique identifier.
Let $v$ be a node which has not the highest identifier and 
such that $Z_v=\alpha(Id_v)$ and for each neighbour $w$ of $v$
$Z_w=Z_v$. Then there exists a neighbour $v'$ of $v$ such that 
$Term_{v'}=false$.
\end{proposition}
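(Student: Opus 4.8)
The plan is to argue by contradiction: assume that at the round $t$ in question every neighbour $v'$ of $v$ satisfies $Term_{v'}=true$, and show that this forces \emph{every} node of $G$ to have held the word $\alpha(Id_v)$ at some round, which is impossible because $v$ is not the node of highest identifier.

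First I would unfold what the assumption gives through the termination rule. That rule can set $Term_{v'}$ to $true$ only when $v'$ is a \emph{follower}, so each neighbour of $v$ is a follower and hence has a parent. Letting $t_{v'}\le t$ be the last round at which $Term_{v'}$ switched to $true$, the fact that $Z_{v'}$ is frozen while $Term_{v'}=true$ gives $Z_{v'}^{\,t_{v'}}=Z_{v'}^{\,t}=\alpha(Id_v)$, and the precondition of the rule, evaluated at round $t_{v'}$, states that $Z_{v'}^{\,t_{v'}}$ is well-formed (it is, being $\alpha(Id_v)$), that \emph{every} neighbour $w$ of $v'$ -- in particular $v$ -- has $Z_w^{\,t_{v'}}=\alpha(Id_v)$, and that every child $w$ of $v'$ at round $t_{v'}$ has $Term_w^{\,t_{v'}}=true$. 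Iterating the last clause along the children relation yields a set $R$ of nodes with associated rounds $(t_x)_{x\in R}$, containing $v$ (with round $t$) and all its neighbours, such that at round $t_x$ every neighbour of $x$ in $G$ holds $\alpha(Id_v)$; the rounds $t_x$ strictly decrease along each branch, so the construction is well-founded.

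Next I would use that $\alpha(Id_v)=1^{|Id_v|}0\,Id_v$ is a \emph{complete} encoding and, by the definition of $\alpha$ together with uniqueness of identifiers, is a prefix of $Y_w$ for no node $w\neq v$; hence only $v$ can ever produce it (via $R_5$, while active), and any other node that ever holds $\alpha(Id_v)$ obtained it -- last bit included -- through $R_3$ or $R_4$ from a neighbour, which the spanning-tree component makes its parent at that instant. Combining this with the bounded-discrepancy picture of Lemma~\ref{lem:voisins} (which makes a known prefix spread essentially one node per round), I would argue that $R$ is closed under taking $G$-neighbours: a neighbour $y$ of some $x\in R$ holds $\alpha(Id_v)$ at round $t_x$, hence lies in the subtree of holders of $\alpha(Id_v)$ rooted at $v$, and the termination signal $x$ received witnesses that this subtree had already quiesced around $y$, so $y$ too enters $R$. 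Since $G$ is connected and $v\in R$, this forces $R=V$, so \emph{every} node -- in particular the node $u$ of highest identifier -- held $\alpha(Id_v)$ at some round. But $u$ never applies $R_{1.1}, R_{1.2}, R_2, R_3$ or $R_4$ (Remark~\ref{dominer}), so $Z_u$ is at all times a prefix of $\alpha(Id_u)$; and $\alpha(Id_v)$ is never such a prefix, since $\alpha(Id_v)\prec\alpha(Id_u)$ by Remark~\ref{rem:lexicographic} and, having a $0$ at position $|Id_v|+1$ against a $1$ in $\alpha(Id_u)$, it cannot be an initial segment of $\alpha(Id_u)$. This contradiction shows that some neighbour $v'$ of $v$ satisfies $Term_{v'}=false$.

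The hard part, I expect, is precisely the middle claim that $R$ is closed under taking $G$-neighbours. Because the parent and children relations are time-varying, $R$ is really assembled from subtrees recorded at different rounds, so one has to argue carefully -- using that the highest prefix known anywhere advances at bounded speed (Lemma~\ref{lem:voisins}) and that $\alpha(Id_v)$ is sourced only by $v$ -- that a neighbour of a node of $R$ had, at the relevant round, already terminated and was itself surrounded by holders of $\alpha(Id_v)$, and hence belongs to $R$. This step requires more delicate bookkeeping of the rounds $t_x$ than the sketch above makes explicit.
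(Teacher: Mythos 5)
Your overall skeleton (argue by contradiction, show the frozen word $\alpha(Id_v)$ would be forced onto nodes it cannot reach, conclude at a node of higher identifier) matches the paper, and your endgame is actually cleaner than the paper's: using Remark~\ref{dominer} to get that the maximum node $u$ only ever applies $R_5$, so $Z_u$ is always a prefix of $\alpha(Id_u)$, of which $\alpha(Id_v)$ is never a prefix. But the heart of the proof --- that the assumption ``every neighbour of $v$ has $Term=true$'' forces nodes arbitrarily far from $v$ to hold $\alpha(Id_v)$ --- is exactly the step you leave open, and the closure claim you base it on does not follow from what you establish. Unfolding the termination rule along the children relation only reaches (at suitable earlier rounds) descendants of $v$'s neighbours in the tree structure. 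For an arbitrary $G$-neighbour $y$ of some $x\in R$ you only learn $Z_y^{t_x}=\alpha(Id_v)$; nothing tells you that \emph{all of $y$'s} neighbours ever simultaneously held $\alpha(Id_v)$, i.e.\ that $Term_y$ was ever set to $true$ --- $y$ may be a node about to be overrun by the higher prefix arriving from elsewhere, with a neighbour that never holds $\alpha(Id_v)$ at all. The ``termination signal $x$ received'' concerns $x$'s children only, so it witnesses nothing about $y$. Hence $R=V$ is not derived and the contradiction at $u$ is never reached; this is a genuine gap, not a matter of bookkeeping of the rounds $t_x$.

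The paper's proof supplies precisely the mechanism you are missing. It fixes a path $v=v_0,v_1,\dots,v_k=w'$ towards a node of higher identifier and shows step by step that each $v_i$ holds $\alpha(Id_v)$ \emph{and has $Term_{v_i}=true$}. The two ingredients are: (a) a node other than $v$ holding the complete word $\alpha(Id_v)$ acquired it through $R_2$, $R_3$ or $R_4$ from a neighbour that itself held $\alpha(Id_v)$ and became its parent (you state a version of this but do not exploit it; note $R_2$ is possible too, not only $R_3$/$R_4$), so the parent chain from $v_i$ runs through holders of $\alpha(Id_v)$ back to $v$; and (b) the node of that chain adjacent to $v$ is a neighbour of $v$, hence has $Term=true$ by assumption, and since the rule requires all children to have reported $Term=true$, the value $true$ propagates \emph{down} the chain to $v_i$ itself. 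Only then does the precondition $\forall w\in N_{v_i},\,Z_w=Z_{v_i}$ of the rule push $\alpha(Id_v)$ one step further, onto $v_{i+1}$. This downward propagation of $Term$ along parent chains rooted at $v$ is the missing idea; with it one reaches a neighbour of $w'$ (or, in your variant, eventually the maximum node) and concludes. Two small additional nits: your non-prefix argument via the separator bit only covers $|Id_v|<|Id_u|$ (for equal lengths use that a prefix of equal length is an equality), and membership of $y$ in ``the subtree of holders rooted at $v$'' must be argued via point (a) above, not assumed.
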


\begin{proof}
%
%
Suppose that some $v$ which does not have the maximum identifier
does have in some round $Z_v=\alpha(Id_v)$ and for each neighbour $w$ of $v$
$Z_w=Z_v$ and $Term(w)=true$. We will deduce a contradiction.

Write $A$ for $\alpha(Id_v)$.
Define $S(A)$ as the set of nodes $w$ such that $\exists t | Z_w^t=A$
and the $A$-parent of $w\in S(A)\setminus v$ as the node $w'$ which becomes
parent of $w$ when $Z_w$ becomes $A$ (say at round $t$).
Since no $\alpha(Id)$ is a prefix of another $\alpha(Id')$,
we have $Z_{w'}^{t-1}=A$ and
once $Z_w=A$, the next modification of $Z_w$ can only be a delete or a change,
which implies that $Z_w$ cannot become $A$ a second time.
$w'$ is also in $S(A)$ and so following the chain of
$A$-parents from $w$ must end at $v$.
Thus any node in $S(A)\setminus v$ is a $A$-descendant of $v$.

Suppose a node $w\in S(A)$ has $Z_w^{t-1}=A$ and sets $Term_w$ true at round $t$. Then all
neighbours of $w$ have $Z^{t-1}=A$ and so any $A$-child $w'$
of $w$ has not changed
$Z_{w'}$ since it became $A$; so $w'$ is still a child of $w$.
Hence $Term_{w'}^{t-1}=true$.
Repeating this argument, any $A$-descendant of $w$ had $Z^{t'-1}=A$ and
set $Term=true$ at some time $t'<t$.

Since by supposition every neighbour $w$ of $v$ has $Z_w=Z_v=A$ and $Term(w)=true$, 
every node in $S(A)\setminus v$
has $Z^{t-1}=A$ and
sets $Term=true$ at some time $t$. So any neighbour of a node
in $S(A)$ is also in $S(A)$. Since $G$ is connected all nodes of $G$
are in $S(A)$.

In particular the node with highest identifier is in $S(A) \setminus v$,
implying that it has an A-parent and so became a follower during the execution
in contradiction with the fact that it is always active.

\qed
\end{proof}
If the node $u$ with highest identifier, 
 becomes $elected$ as soon as, for each neighbour $v$ of $u$, 
 $Z_v=\alpha(Id_u)$ and $Term_v=true$ 
and it receives 
from each child $v$  the signal 
indicating that $Term_v=true$ we deduce:

\begin{theorem}
\label{th:main}
There exists an election algorithm for graphs $G$
in which each node has a unique integer identifier, using
messages of size $O(1)$ which terminates
after at most $|\alpha(Id_u)|+3 D$ rounds
where $u$ is the node with the highest identifier.
\end{theorem}

The time bound follows from Theorem~\ref{th:complexity} and Proposition~\ref{prop:STTterm}.

\section{Further Related Work}
\label{sec:related}

We provide here a more detailed account of the literature on the leader election problem, which recounts and extends references mentioned in Section~\ref{sec:introduction}. The election problem is fundamental in distributed computing and, indeed, there exists
a vast body of literature on the topic -- see for instance the treatment of this problem in standard books~\cite{Tel,Attiya,Lynch,Santoro} and references therein.
This problem is close to that of spanning tree construction, and it seems that it was first formulated by LeLann~\cite{LeLann}. 
As indicated in \cite{KPPRT15}, some simple questions are still open and therefore this is still a problem which is much alive in the distributed computing community.
Usually, this problem is investigated in one of the following three directions:
\begin{enumerate}
\item Characterisation of (anonymous) graphs for which there  exists a 
deterministic election algorithm;
\item Lower and upper bounds of the time complexity and the message
complexity of deterministic election algorithms
depending on how much is initially known about the graph, 
it is assumed that each node has a unique identifier;
\item Randomised election algorithms for anonymous graphs 
depending on the knowledge on the graph such as the size, the diameter
or the topology (trees, complete graphs...).
\end{enumerate}

For the first item  the starting point is the seminal  work of Angluin
\cite{Angluin} which
 highlights, in particular, the key role of coverings: a graph $G$ is a 
covering of a graph $H$ if there is a surjective homomorphism
$\varphi$ from $G$ to $H$ which is locally bijective (the restriction of
$\varphi$ to incident edges of any node $v$ is a bijection between
incident edges of $v$ and incident edges of $\varphi(v)$).
 More general definitions may be found in \cite{BVfibrations}.
Characterisations of graphs for which there exists an election algorithm
depend on the model.
The first  characterisations were obtained in
\cite{BVelection,YKsolvable,MazurEnum}.
The fundamental tool in 
\cite{BVelection,YKsolvable} is the notion of view: the view from a node
$v$ of a labelled graph $G$ is an infinite labelled tree rooted in $v$
obtained by considering all labelled walks in $G$ starting from $v$.
The characterisation in \cite{MazurEnum} used non-ambiguous graphs:
a graph labelling is said to be locally bijective if vertices with the same
label are not in the same ball and 
have isomorphic labelled neighbourhoods. A graph $G$ is   ambiguous 
if there exists a non-bijective
  labelling of $G$ which is locally bijective.
In \cite{GMMrecog}, authors  prove that  the non-ambiguous graphs, as 
introduced by Mazurkiewicz, are exactly
the covering-minimal graphs.
The main ideas of the election algorithm developed in  
\cite{MazurEnum} were applied 
 to some other models in
\cite{CMasynj,CMsynchro,Csofsem} by adapting the notion of covering.
A characterisation of families of graphs which admit an election algorithm
(i.e., the same algorithm works on each graph of the family)
can be found in \cite{CGM12}.

Concerning the second item, lower bounds or upper bounds for deterministic algorithms
when nodes have a unique identifier which is a non negative
integer of size $O(\log n)$:
\begin{itemize}
\item {\it for the time complexity:}
Peleg presents in \cite{Peleg90} a simple time optimal election algorithm
for general graphs: its time complexity is $O(D)$;
 the size of
messages is $O(\log n)$ thus its bit round complexity is
$O(D\log n)$ and the message complexity is $O(D|E|)$
where $|E|$ is the size of the edge set.
More recently, Kutten et al. \cite{KPPRT15}
prove the lower bound $\Omega(D)$ for the time
complexity in a very general context which contains the deterministic case
studied in this paper. Fusco and Pelc \cite{FP15} show that the 
time complexity
of the election problem is
$\Omega(D+\lambda)$ where $\lambda$ is the level of symmetry
of the graph $G$ (Let $G$ be graph. 
The view at depth $t$ from a node is the tree of all paths
of length $t$ originating at this node. The symmetry of $G$ is the
smallest depth at which some node has a unique view of $G$).
In our case, each node has  a unique identifier thus $\lambda=0$, and we obtain
the same bound as~\cite{KPPRT15}.
\item {\it for the message complexity:}
Gallager \cite{Gallager} presents the first election 
algorithm for general graphs with $O(m+n\log n)$ messages, 
where $m$ is the number of edges.
On the negative side, Burns~\cite{burns1980} prove a $\Omega(n\log n)$ lower bound and Kutten et al.~\cite{KPPRT15} a $\Omega(m)$ lower bound which applies even if $n$ is known and the algorithm is randomized. Put together, both lower bounds yield a matching $\Omega(m + n\log n)$ number of messages. (Santoro~\cite{Santoro84} also proves a $\Omega(m+n\log n)$ lower bound for the more specific problem of finding the maximum ID, in a deterministic setting with $n$ unknown.)
The work presented in \cite{GHS83} had a
great influence on many papers, the time complexity of the algorithm 
is $O(n\log n)$ and the message complexity is optimal  in the worst case.
Optimal message complexity in $O(m+n\log n)$ has been obtained 
also in \cite{Awer87}, in this case the time complexity is $O(n)$, the size 
of message is $O(\log n)$ and the
bit round complexity is $O(n\log n)$.
We can note that very efficient algorithms for both election
and spanning tree computation are presented in \cite{KKM90}.
\end{itemize}
This direction has also been subject to recent developments where the impact of particular knowledge is studied such as~\cite{DP16,GMP17,MP16}. Finally, the tradeoff between time and communication complexity in the case of leader election and spanning tree was considered in~\cite{KK13} in the case of ad hoc networks, when only the neighbors are known to nodes.

Regarding the third item, probabilistic algorithms, a Las Vegas algorithm is one which terminates
with a positive probability (in general $1$) and always produces
 a correct result.
A Monte Carlo algorithm is a probabilistic algorithm which always
terminates; nevertheless the result may be wrong with non-zero
probability.
Some results on graphs having
$n$ vertices are expressed with high probability,
meaning with probability $1-o(n^{-1})$ (w.h.p. for short).
Chapter $9$ of \cite{Tel} and \cite{lavault}
give a survey of what can be done and of
impossibility results in anonymous networks concerning the election problem.
  In particular, no
deterministic algorithm can elect (see Angluin \cite{Angluin}, Attiya
et al. \cite{ASW} and Yamashita and Kameda \cite{YK88}); furthermore,
with no knowledge on the network, there exists no Las Vegas election
algorithm \cite{IR90}. 
In~\cite{KuPPRT15}, Kutten et al. present a leader election algorithm to elect (implicitly) a leader (with high probability) that runs in $O(1)$ time using a sublinear amount of messages, namely $O(\sqrt{n}\log^{3/2} n)$.
Monte Carlo election algorithms for anonymous
graphs without knowledge are presented in \cite{IR90,AM94,SS94}.
They are correct
with probability $1-\epsilon$, where $\epsilon$ is fixed
and known to all vertices. M{\'e}tivier et al.~\cite{MRZ15} presents
 Monte Carlo algorithms
which solve the problems discussed above w.h.p. 
 and which ensure for each node $v$ 
an error probability bounded by $\epsilon_v$ where $\epsilon_v$ is
determined by $v$ in a fully decentralised way. To be more precise,
these 
algorithms ensure an error probability bounded by $\epsilon$
where $\epsilon$ 
is the smallest value among the set of error probabilities
determined  independently by each node.
If the network size is known then Las Vegas election algorithms exist, e.g., in~\cite{IR90}.
Finally, recent works like~\cite{GRS18} explore the role played by other network parameters such as conductance and expansion, and some questions in the same spirit as the ones we addressed in this article regarding tradeoff between time complexity and communication complexity are still open.

\section{Conclusion}
\label{sec:conclusion}
This article focused on the problem of deterministic election in arbitrary networks with unique identifiers. Three complexity measures were discussed in general: time complexity, message complexity, and bit (round) complexity.
It was known that $\Omega(m+n\log n)$ is a lower bound for the number of messages and an algorithm with matching complexity exists.
In~\cite{KPPRT15}, Kutten et al. show that concerning the time
complexity $\Omega(D)$ is a lower bound and~\cite{Peleg90} implies that
$O(D)$ is a tight upper bound. 
For bit (round) complexity, we deduced from~\cite{KPPRT15} and~\cite{DS07}
that  $\Omega(D+\log n)$ is a lower bound and we presented 
an algorithm that matches this bound with a running time of $O(D+\log n)$ bit rounds. This algorithm is the first whose bit round complexity breaks the $O(D \log n)$ barrier, and furthermore, through its optimality in terms of bit rounds, it gives a positive answer to whether optimality can be
achieved
both in time and in the amount of communication, which question was thought to be settled due to the impossibility to satisfy both when messages are (as is frequently assumed) of size $O(\log n)$. As such, our results also make a case for studying the complexity of algorithms through the lenses of bit round complexity. Finally, it could be interesting to explore whether some of the techniques presented in this article are applicable when the size of messages is less constrained, e.g., logarithmic (CONGEST model).

\subsection*{Acknowledgment} We thank the anonymous referees for their many helpful comments on an earlier version of this article.

\bibliographystyle{plain}
\bibliography{election}

\end{document}